\documentclass[10pt,conference]{IEEEtran}

\usepackage[left=0.625in,right=0.625in,top=0.75in,bottom=1in]{geometry}
\usepackage{booktabs}
\usepackage{graphics,colortbl,multicol,lipsum,xcolor,graphicx,color}
\usepackage[cmex10]{amsmath}
\usepackage{amsthm,mathrsfs,mathtools,amsbsy,amsfonts,amssymb,}
\usepackage{bbm}
\usepackage{setspace,float,pstool,lettrine,newclude}
\usepackage[normalem]{ulem}
\usepackage{enumitem}
\usepackage{latexsym,algpseudocode,gensymb,balance,multirow,bm,wasysym}
\usepackage{hhline}
\usepackage{cite}
\usepackage[linesnumbered,ruled,vlined]{algorithm2e}
\SetKwInput{KwInput}{Input}  
\SetKwInput{KwOutput}{Output}
\SetKw{KwBy}{by}
\makeatletter
\newcommand{\nosemic}{\renewcommand{\@endalgocfline}{\relax}}
\newcommand{\dosemic}{\renewcommand{\@endalgocfline}{\algocf@endline}}
\let\oldnl\nl
\newcommand{\nonl}{\renewcommand{\nl}{\let\nl\oldnl}}
\makeatother

\definecolor{pull}{RGB}{118, 223, 191}
\definecolor{silent}{RGB}{248,96,113}
\ifCLASSOPTIONcompsoc
\usepackage[caption=false,font=normalsize,labelfon
t=sf,textfont=sf]{subfig}
\else
\usepackage[caption=false,font=footnotesize]{subfig}
\fi

\DeclarePairedDelimiterX\MeijerM[3]{\lparen\!}{\rparen}%
{\,#3\delimsize\vert\begin{smallmatrix}#1 \\ #2\end{smallmatrix}}

\newcommand\MeijerG[8][]{%
  G^{\,#2,#3}_{#4,#5}\MeijerM[#1]{#6}{#7}{#8}}

\WithSuffix\newcommand\MeijerG*[7]{%
  G^{\,#1,#2}_{#3,#4}\MeijerM*{#5}{#6}{#7}}

\allowdisplaybreaks

\newtheorem{proposition}{Proposition}

\graphicspath{{figures/}}
\allowdisplaybreaks

\hyphenation{op-tical net-works semi-conduc-tor}
\newcommand{\RNum}[1]{\uppercase\expandafter{\romannumeral #1\relax}}

\usepackage{mathtools}

\usepackage{accents}



\IEEEoverridecommandlockouts

\begin{document}

\title{Effective Communication: When to Pull Updates?}
\author{\IEEEauthorblockN{Pouya Agheli\IEEEauthorrefmark{1}, Nikolaos Pappas\IEEEauthorrefmark{2}, Petar Popovski\IEEEauthorrefmark{3}, and Marios Kountouris\IEEEauthorrefmark{1}}
    \IEEEauthorblockA{\IEEEauthorrefmark{1}Communication Systems Department, EURECOM, Sophia Antipolis, France}
    \IEEEauthorblockA{\IEEEauthorrefmark{2}Department of Computer and Information Science, Link\"{o}ping University, Link\"{o}ping, Sweden}
    \IEEEauthorblockA{\IEEEauthorrefmark{3}Department of Electronic Systems, Aalborg University, Aalborg, Denmark\\ Email: pouya.agheli@eurecom.fr, nikolaos.pappas@liu.se, 
    petarp@es.aau.dk,
    marios.kountouris@eurecom.fr}
    \thanks{The work of P. Agheli and M. Kountouris has received funding from the European Research Council (ERC) under the European Union’s Horizon 2020 research and innovation programme (Grant agreement No. 101003431). The work of N. Pappas is supported by the VR, ELLIIT, and the European Union (ETHER, 101096526), and the work of P. Popovski was supported by the Villum Investigator Grant ``WATER" from the Velux Foundation, Denmark.}
}
\maketitle

\begin{abstract}
We study a pull-based communication system where a sensing agent updates an actuation agent using a query control policy, which is adjusted in the evolution of an observed information source and the usefulness of each update for achieving a specific goal. For that, a controller decides whether to pull an update at each slot, predicting what is probably occurring at the source and how much effective impact that update could have at the endpoint. Thus, temporal changes in the source evolution could modify the query arrivals to capture important updates. The amount of impact is determined by a \emph{grade of effectiveness} (GoE) metric, which incorporates both freshness and usefulness attributes of the communicated updates. Applying an iterative algorithm, we derive query decisions that maximize the long-term average GoE for the communicated packets, subject to cost constraints. Our analytical and numerical results show that the proposed query policy exhibits higher effectiveness than existing periodic and probabilistic query policies for a wide range of query arrival rates.
\end{abstract}

\IEEEpeerreviewmaketitle

\section{Introduction}\label{Sec:Sec1}
The effectiveness problem in communication systems deals with whether or not a message conveyed by a sender, leads to a desirable impact at the receiver for achieving a specific goal. This has been first articulated in \cite{ShannonWeaver49}, and has recently been revived under the prism of goal-oriented semantic communication \cite{kountouris2021semantics,popovski2020semantic}.
In cyber-physical systems with interacting sensing entities and actuation/monitoring agents, a message ought to be generated and communicated \emph{if} it can potentially have the desired impact in the system. In this setting, the communication goal determines the \emph{grade of effectiveness} a message has according to its usefulness or importance in fulfilling a set of attributes required for achieving that goal. This approach has the potential to enable system scalability and judicious use of resources by avoiding the acquisition, processing, and transport of information that turns out to be ineffective, irrelevant, or useless.
 
In current networked intelligent systems, information transfer, e.g., in the form of status update packets, over the network is mainly done using a \emph{push-based} communication model. Packets are pushed toward the endpoint according to the source’s decision, regardless of what and when the endpoint actually needs. On the other hand, we have the \emph{pull-based} communication framework, where the endpoint requests and controls the type of the generated information and its arrival time \cite{pullB,pullC, pullD, pullE, pullF, yavascan2021pull, ildiz2021pull}. To this end, existing query policies overlook the evolution of the source (what is probably happening on the sensing side) and the expected importance or usefulness of the updates at the time of pulling them. As a result, query-based policies, even if they are aligned with the receiver's objectives, do not necessarily result in high system effectiveness. To address this challenge and to bridge the pull-based with the push-based world, we introduce a novel policy where the endpoint pulls updates based on the statistics of the source and the previously received updates. Thanks to this policy, the endpoint timely adapts its query instances according to the received updates, hence being able to capture updates (e.g., critical source events, alarms, novelty, etc.) that have high impacts on achieving the goal. In a way, our policy can be seen as an adaptive query policy where requests tend to align to the most useful or important system-wise source realizations.   


This paper falls within the realm of pull-based communication methods, but with a query control policy, owing to which the query controller timely decides to pull updates following the source changes and the updates' expected usefulness. Our work leverages the concept of Just-in-Time systems \cite{leiserson1979systolic} and extends prior work on query age of information (QAoI) \cite{pullB,pullC, pullD, pullE, pullF, yavascan2021pull, ildiz2021pull} into the systems where queries arrive at the right times to request the source realizations (updates) with high usefulness and significant effect at the endpoint. In this regard, a class of optimal policies is derived to maximize the long-term expected effectiveness of the update packets a sensing agent sends to an acting agent, subject to a communication cost. Our analytical and simulation results show that the proposed query policy outperforms existing baseline approaches in terms of effectiveness grade. Furthermore, we demonstrate that the solution converges to a threshold-based query control scheme, where the query controller can timely decide whether to pull an update or not by merely looking up a table with the obtained threshold boundaries for a given goal.

\section{System Model}\label{Sec:Sec2}
We consider an end-to-end pull-based communication system in which update packets are generated via a \emph{sensing agent} (SA) and transmitted to an \emph{actuation agent} (AA) for taking appropriate actions at the endpoint and fulfilling the subscribed goal. The packet generation is done in the event of receiving a \emph{query} from a query controller at the AA. Here, the packets contain status updates of a sensor observing an information source/event or payload data from the application layer towards the network. We assume the system operates in a slotted manner with time slots $n$, for $n \in \mathbb{N}$. At the $n$-th time slot, the query arrival indicator is denoted by $\alpha_n = \{0, 1\}$, with $\alpha_n = 1$ indicating the arrival of a query, and $\alpha_n = 0$, otherwise. Thus, the set of slots at which queries arrive is defined as $\mathcal{N} = \{n \,|\, \forall n : \alpha_n = 1\}$. The update and query channels are assumed to be \emph{error-prone} and \emph{error-free}, respectively, and the whole duration of receiving a query and the consequent update acquisition and its communication is normalized into one time slot. Also, we consider $x_n$ the update communicated at slot $n$, which can be seen as some form of semantic representation.

In parallel with its acquisition, the $n$-th update is evaluated based on its usefulness for satisfying the goal and is attributed a rank of importance (value) at the source level in the form of a meta-value $v_n$, $\forall n \in \mathcal{N}$ \cite[Section~III-A]{agheli2023semantic}. For the sake of generality, we assume that $v_n$ is a random variable (r.v.) that follows a discrete-time Markov process with finite state space $\mathcal{V} = \{\nu_i \,|\, i = 1, 2, ..., |\mathcal{V}|\}$ and transition matrix $\mathbf{P}_v = [p_{ij}]_{|\mathcal{V}|\times|\mathcal{V}|}$ with $p_{ij} = \mathrm{Pr}(v_{n+1}=\nu_j\,|\,v_n = \nu_i)$. From an effectiveness viewpoint, a successfully received update has a specific impact at the receiver or decision-making side. In the proposed model, we consider that the goal is perfectly known at both SA and AA, thus the same framework is employed to measure the usefulness of updates at both ends.\footnote{The analysis can be easily extended to the case where the goal is not shared, but the AA, based on the received updates, may learn or estimate the probability distribution of the updates’ usefulness at the endpoint.}

\subsection{Query Control Policies}
We consider that the query controller can apply two different types of policies to pull updates from the SA in a pull-based system, as follows.

\subsubsection{Effect-agnostic policy}
Under this policy, the queries arrive with a \emph{controlled query rate}, pursuing a specific schedule or a stochastic process, e.g., Poisson, binomial, and Markovian \cite{pullC, pullD, pullE, pullF}. Therefore, the existing effect-agnostic policies pose \emph{aleatoric} uncertainty associated with the nature of random updates by ignoring what is probably occurring at the source during the time of pulling those updates.

\subsubsection{Effect-aware policy}
With this policy, the AA tries to infer or predict the probable state of the source and the expected usefulness of an update at the time of the decision, hence adapting its query instances and pulling updates to the right time slots. This can be done via building or updating a model of the source’s process at the AA. Owing to this policy, albeit packet generation and communication are requested by queries, what happens at the source is also considered. Therefore, \emph{epistemic} uncertainty arises with the effect-aware policy, as decisions are made based on probabilistic predictions, instead of accurate knowledge. This uncertainty could be reduced or harnessed using for instance machine learning or prediction methods. We delve into this query control problem in Section~\ref{Sec:Sec4}.



\section{Grade of Effectiveness Metric}\label{Sec:Sec3}
To model effectiveness, we advocate for a metric that involves two system-level attributes: \emph{freshness} and \emph{usefulness} of the successfully received updates. Freshness describes how obsolete an update gets as time passes from the instant it is correctly received at the AA. This attribute is commonly quantified via the age of information (AoI) metric or its variants. Besides, the usefulness signifies the rank of importance, as defined in Section~\ref{Sec:Sec2}, each correctly received packet offers at the endpoint.

With the above explanation in mind, we propose the grade of effectiveness (GoE) metric for measuring the amount of impact an update has at the endpoint.

\vspace{-0.05cm}
\subsection{GoE Formulation}
The GoE for the update packet communicated at the $n$-th time slot is denoted by $\operatorname{GoE}_n \in \mathbb{R}_0^+$ and modeled in the form of a composite function $f: \mathbb{R}_0^{+}\times\mathbb{R}_0^{+} \rightarrow \mathbb{R}_0^+$ of the AoI, called $\Delta_n$ for the $n$-th time slot, and the usefulness of that update, i.e., $v_n$, to satisfy the subscribed goal. Thus, we can write\footnote{The GoE metric can be seen as a particular case of the Semantics of Information (SoI) metric \cite{kountouris2021semantics,pappas2021goal} and can also be defined in various forms based on the scenario, including that utilized for active fault detection in \cite{stamatakis2022semantics}.}
\begin{align}\label{eq:eq1}
\operatorname{GoE}_n &= f\big(g_\Delta(\Delta_n), g_v(v_n)\big)
\end{align} 
where $g_\Delta : \mathbb{R}_0^+ \rightarrow \mathbb{R}_0^+$ is a non-increasing \emph{penalty} function, and $g_v : \mathbb{R}_0^+ \rightarrow \mathbb{R}_0^+$ shows a non-decreasing \emph{utility} function. In \eqref{eq:eq1}, $\Delta_n = n-m$ with $\Delta_0 = 1$, and $m = \underset{i :\,\alpha_i(1-\epsilon_i)=1,\, i \leq n}{\rm {max}~} i$. Also, $\epsilon_n = g_\epsilon(d(x_n, \hat{x}_n))$ depicts the update discrepancy, where $\hat{x}_n$ indicates the received update at the AA, $g_\epsilon : \mathbb{R}_0^+ \rightarrow \{0, 1\}$ is the mapping function to the Boolean space, and $d : \mathbb{R} \rightarrow \mathbb{R}_0^+$ is a distance function.
In \eqref{eq:eq1}, if we overlook the usefulness of updates or their freshness, the GoE metric turns into the penalty definition for the QAoI, i.e., $\operatorname{GoE}_n = f\big(g_\Delta(\Delta_n)\big)$, or the utility formulation for the \emph{value of information} (VoI), i.e., $\operatorname{GoE}_n = f\big(g_v(v_n)\big)$, respectively.

\vspace{-0.05cm}
\subsection{The Pull-Based System's GoE}
Since the update acquisition and communication is done on the condition of receiving queries in the pull-based system, $\operatorname{GoE}_n^{\rm (pull)}$ becomes limited to the slots at which the queries arrive, i.e., $\forall n \in \mathcal{N}$. From \eqref{eq:eq1}, we can write
\begin{align}\label{eq:eq2}
\operatorname{GoE}_n^{\rm (pull)} = \operatorname{GoE}_{n \,|\, \alpha_n=1} \times \mathbbm{1}(\alpha_n = 1)
\end{align}
where $\mathbbm{1}(\cdot)$ denotes the indicator function.

\vspace{-0.05cm}
\section{Query Control}\label{Sec:Sec4}
In this section, we first formulate the query control problem for the effect-aware policy, and we then propose a solution for the defined decision/control problem.

\vspace{-0.05cm}
\subsection{Problem Formulation}\label{Sec:Sec4a}
The objective is to maximize the long-term expected GoE via controlling the query arrivals, subject to a constraint on the average communication cost $C_{\rm max}$, which cannot be surpassed. In this sense, a class of optimal policies, named $\pi^*$, is derived by solving an optimization problem as follows
\begin{align}\label{opt:opt1}
    \mathcal{P}_1 : ~ &\underset{\pi}{\operatorname{max}} ~\underset{N \rightarrow \infty}{\lim \sup} ~ \frac{1}{N}\, \mathbb{E}\bigg[ \sum_{n = 1}^{N} \operatorname{GoE}_n^{\rm (pull)} \big| \operatorname{GoE}_0^{\rm (pull)}\bigg] \nonumber \\
    & {\rm s.t.}~ \underset{N \rightarrow \infty}{\lim \sup} ~ \frac{1}{N}\, \mathbb{E}\bigg[ \sum_{n = 1}^{N} g_c(\alpha_n c_n)\bigg] \leq C_{\rm max}
\end{align}
where $g_c : \mathbb{R}_0^+ \rightarrow \mathbb{R}_0^+$ is a non-decreasing function, and $c_n$ indicates the communication cost at the $n$-th time slot.

We cast $\mathcal{P}_1$ into an \emph{infinite-horizon} constrained Markov decision process (CMDP) based on the following definitions:
\begin{enumerate}
    \item[I)] \textit{States:}
    The state at the $n$-th slot depicts the GoE and is denoted with a tuple $S_n = (\Delta_n, v_n)$. 
    Without loss of generality, we consider the AoI to be truncated by the maximum value of $\Delta_{\rm max}$, which is large enough to represent excessive staleness and meet $g_\Delta(\Delta_{\rm max}\!-\!1)\leq (1 +\varepsilon) g_\Delta(\Delta_{\rm max})$ with accuracy $\varepsilon$. 
    Given this, $S_n$ is a member of the state space $\mathcal{S} = \{\sigma_i\,|\, i = 1, 2, ..., |\mathcal{S}|\}$, which is countable and finite with $|\mathcal{S}| = \Delta_{\rm max}\cdot|\mathcal{V}|$.
    \item[II)] \textit{Actions:}
    Regarding Section~\ref{Sec:Sec2}, we define the action space $\mathcal{A}$ with two possible outcomes: $\alpha_n = 1$ for pulling an update, and $\alpha_n = 0$ for keeping silent. 
    \item[III)] \textit{Transition probabilities:} The transition probability from $S_n$ to $S_{n+1}$ under the action $\alpha_n$ is $P(S_n, \alpha_n, S_{n+1}) = \mathrm{Pr}((\Delta_{n+1}, v_{n+1})\,|\,(\Delta_{n}, v_n), \alpha_n)$, $\forall n$, where we define
    \begin{itemize}
        \item $\mathrm{Pr}((1, \nu_j)\,|\,(\Delta_{n}, \nu_{i}), \alpha_n) = p_{ij}\alpha_n(1\!-\!p_\epsilon^{(n)})$, $\forall i,j$,
        \item $\mathrm{Pr}((\operatorname{min}\{\Delta_{n}\!+\!1, \Delta_{\rm max}\}, \nu_{i})\,|\,(\Delta_{n}, \nu_{i}), \alpha_n) = \alpha_n p_\epsilon^{(n)} + (1\!-\!\alpha_n)$, $\forall i$.
    \end{itemize}
    Otherwise, we have $P(S_n, \alpha_n, S_{n+1})=0$.
    Here, $\nu_{i}, \nu_{j} \in \mathcal{V}$, and $p_\epsilon^{(n)} = \mathrm{Pr}(\epsilon_n = 1)$. 
    \item[IV)] \textit{Rewards:} 
    The reward of going from $S_n$ to $S_{n+1}$ in the course of the action $\alpha_n$ is $R(S_n, \alpha_n, S_{n+1}) = \operatorname{GoE}_{n+1}^{\rm (pull)}$.
\end{enumerate}

With the above definitions in mind, we give the following proposition that states that the expected sum of GoE in \eqref{opt:opt1} is the same for all initial states, hence there exists an optimal stationary policy for the defined problem. 

\begin{proposition}\label{propose1} The modeled CMDP pursues the weak accessibility (WA) condition.
\end{proposition}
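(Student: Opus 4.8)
The plan is to verify the weak accessibility condition by establishing the stronger property that the induced chain is \emph{communicating}: I will show that, for every ordered pair of states $\sigma=(\Delta,\nu_i)$ and $\sigma'=(\Delta',\nu_j)$ in $\mathcal{S}$, the state $\sigma'$ is accessible from $\sigma$ with positive probability under a single stationary policy. This yields the WA partition with an empty transient set, $S_t=\emptyset$ and $S_c=\mathcal{S}$, which in particular forces the optimal average GoE of $\mathcal{P}_1$ to be independent of the initial state and attainable by a stationary policy through the standard average-reward theory for finite MDPs.

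First I would isolate the two elementary moves afforded by the transition kernel. Taking the silent action $\alpha_n=0$ sends $(\Delta,\nu_i)\to(\min\{\Delta+1,\Delta_{\rm max}\},\nu_i)$ with probability one, so it leaves the meta-value fixed while incrementing the age; iterating it reaches any target age $\Delta'$ starting from age $1$. Taking the pull action $\alpha_n=1$ sends $(\Delta,\nu_i)\to(1,\nu_j)$ with probability $(1-p_\epsilon^{(n)})\,p_{ij}$, which is strictly positive whenever the channel admits successful delivery, $p_\epsilon^{(n)}<1$, and $p_{ij}>0$; this resets the age to $1$ and advances the meta-value by one step of $\mathbf{P}_v$. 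Under the natural assumption that the meta-value process $\mathbf{P}_v$ is irreducible on $\mathcal{V}$, there exists for every pair $\nu_i,\nu_j$ a positive-probability path $\nu_i\to\nu_{k_1}\to\cdots\to\nu_j$ of one-step transitions, which I may take to be simple (visiting $\nu_j$ only at its end).

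To merge these moves under one policy I would fix the uniformly randomized stationary policy that selects $\alpha_n\in\{0,1\}$ with equal probability in every state. Under this policy every transition that is feasible for some action carries positive probability, so the reachability graph simultaneously contains all pull-success edges and all silent edges. Concatenating them gives the explicit trajectory $(\Delta,\nu_i)\to(1,\nu_{k_1})\to\cdots\to(1,\nu_j)\to(2,\nu_j)\to\cdots\to(\Delta',\nu_j)$, i.e.\ first drive the meta-value along the chosen path by successive successful pulls, then build the age up to $\Delta'$ by silent actions. Every edge has strictly positive probability, so $\sigma'$ is accessible from $\sigma$; since the pair was arbitrary, all states communicate and the WA condition holds.

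The crux of the argument is therefore not the path construction but the two non-degeneracy hypotheses it rests on, and I expect this to be the main obstacle to make watertight. The requirement $p_\epsilon^{(n)}<1$ is mild, but the irreducibility of $\mathbf{P}_v$ is essentially necessary: because the silent action keeps the meta-value frozen and saturates the age at $\Delta_{\rm max}$, the state $(\Delta_{\rm max},\nu_k)$ is recurrent (indeed absorbing) under the always-silent policy for every $\nu_k$, so no such state can be placed in the transient set $S_t$; if $\mathbf{P}_v$ had two distinct recurrent classes, the saturated states carrying values from different classes would all be forced into $S_c$ yet could not access one another under any policy, and WA would fail. Verifying that $\mathbf{P}_v$ is irreducible (or, in the reducible case, carefully exhibiting the transient set and checking accessibility only inside a single recurrent block) is the delicate point that the rest of the proof must secure.
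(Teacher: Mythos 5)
Your proof is correct under the two hypotheses you make explicit ($p_\epsilon^{(n)}<1$ and irreducibility of $\mathbf{P}_v$), but it takes a genuinely different route from the paper's. The paper partitions $\mathcal{S}$ into $\mathcal{S}_a=\{(1,\nu)\,|\,\nu\in\mathcal{V}\}$ and $\mathcal{S}_b=\mathcal{S}\setminus\mathcal{S}_a$, asserts that every state with $\Delta_n\geq 2$ is transient under any stationary policy while the states of $\mathcal{S}_a$ mutually communicate, and invokes the WA definition of Bertsekas. You instead prove the stronger statement that the entire chain is communicating (empty transient set) by exhibiting, under a uniformly randomized stationary policy, an explicit positive-probability trajectory that first drives the meta-value along an irreducibility path via successful pulls and then builds the age up with silent slots. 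Your route costs a little extra path construction but buys robustness precisely where you flag the delicacy: the claim that all of $\mathcal{S}_b$ is transient under \emph{every} stationary policy fails at the truncation boundary, since $(\Delta_{\rm max},\nu)$ is absorbing under the always-silent policy --- exactly the observation you make --- so a partition that consigns those states to the transient set does not meet the WA requirement as stated, whereas your all-communicating partition does. Both arguments ultimately rest on the same non-degeneracy conditions (mutual accessibility within $\mathcal{S}_a$ likewise needs a positive-probability value path between any $\nu_i$ and $\nu_j$ together with $p_\epsilon^{(n)}<1$); you are right that these hypotheses, left implicit in the paper but satisfied in its simulation setup where $p_{ij}=1/|\mathcal{V}|$ and $p_\epsilon^{(n)}=0.2$, are the real content that a watertight proof must state.
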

\begin{proof}
We can partition the space set $\mathcal{S}$ into two subsets $\mathcal{S}_a = \{S_n\in\mathcal{S}\,|\,S_n=(1, \nu_n), \forall \nu_n\in\mathcal{V}\}$ and $\mathcal{S}_b = \mathcal{S} - \mathcal{S}_a$. The primary subset contains all states whose $\Delta_n=1$, hence $|\mathcal{S}_a|=|\mathcal{V}|$. The latter subset includes the rest of the states with $\Delta_n\geq 2$, and $|\mathcal{S}_b| = (\Delta_{\rm max} \!-\! 1)\cdot|\mathcal{V}|$. In this regard, all states of $\mathcal{S}_b$ are transient under any stationary policy, while every state of an arbitrary pair of two states in $\mathcal{S}_a$ is accessible from the other state. Given this and considering the WA condition as in \cite[Definition~4.2.2]{bertsekas2007volume}, the modeled CMDP is weakly accessible.
\end{proof}
\vspace{-0.02cm}
As the WA condition holds for the modeled CMDP, it can be concluded that the optimal expected GoE remains the same for all initial states, i.e., $\operatorname{GoE}_n^{\rm (pull)}$, $\forall n$, is independent of $\operatorname{GoE}_0^{\rm (pull)} $\cite[Proposition~4.2.3]{bertsekas2007volume}. 
Moreover, Proposition~\ref{propose1} confirms the existence of an optimal stationary policy $\pi^*$ for $\mathcal{P}_1$. This optimal policy is \emph{unichain} \cite[Proposition~4.2.6]{bertsekas2007volume}. To solve the problem, we first relax its constrained form via defining a dual problem, and then we proceed with proposing an algorithm.

\subsection{Dual problem} We convert the constrained form of $\mathcal{P}_1$ to an unconstrained one via writing the Lagrange function $\mathcal{L}(\mu; \pi)$ as follows
\begin{align}\label{opt:opt2}
    \mathcal{L}(\mu; \pi) &=  \underset{N \rightarrow \infty}{\lim \sup} ~ \frac{1}{N}\, \mathbb{E}\bigg[\sum_{n = 1}^{N} \left(\operatorname{GoE}_n^{\rm (pull)} - \mu g_c(\alpha_n c_n)\right) \!\bigg]\! \nonumber \\
    & ~~~ + \mu C_{\rm max}
\end{align}
where $\mu \geq 0$ indicates the Lagrange multiplier. Then, we can summarize the Lagrange dual problem to be solved as
\begin{equation}\label{opt:opt3}
    \mathcal{P}_2 : ~ \underset{\mu \geq 0}{\inf}~ \underbrace{\underset{\pi}{\operatorname{max}}~\mathcal{L}(\mu; \pi)}_{\coloneqq\,   h(\mu)}
\end{equation}
in the form of an \emph{unconstrained} MDP, where $h(\mu)=\mathcal{L}(\mu; \pi_\mu^*)$ shows the Lagrange dual function. Here, $\pi_\mu^* \!: \mathcal{S} \rightarrow \mathcal{A}$ appears as the $\mu$-optimal policy and is derived from the dual problem for a given $\mu$, as $\pi_\mu^* =  \underset{\pi}{\operatorname{arg\,max}}~\mathcal{L}(\mu; \pi)$.

Since $\mathcal{S}$ has finite states, which ensures the growth condition as in \cite{altman1999constrained}, and $\operatorname{GoE}_n \geq 0$, $\forall n$, from \eqref{eq:eq1}, $\mathcal{P}_1$ and $\mathcal{P}_2$ converge to the same optimal values \cite[Corollary~12.2]{altman1999constrained}. Therefore, we can find the class of optimal policies $\pi^*$ after applying an \emph{iterative algorithm} approach\cite{hatami2022demand}. 

\subsection{Iterative Algorithm}
\SetKwFunction{FMain}{Utility} 
\SetKwProg{Fn}{Function}{:}{}
\begin{algorithm}[t!]
\DontPrintSemicolon
    \caption{Solution for deriving $\pi^*$ and $\mu^*$.} \label{Alg1}
    \KwInput{Known parameters $N \gg 1$, $c_n$, $\forall n$, $\varepsilon_\mu$, $C_{\rm max}$, $\eta$, states $\mathcal{S}$, and actions $\mathcal{A}$. Initial values $l \leftarrow 1$, $\mu^{(0)} \leftarrow 0$, $\mu^- \leftarrow 0$, $\mu^+ \gg 1$, $\pi_\mu^- \leftarrow 0$, and $\pi_\mu^+ \leftarrow 0$. The form of $g_c$.
    }
    Initialize $\pi_\mu^*(s)$, $\forall s \in \mathcal{S}$, via running \FMain{$\mu^{(0)}$}.\\
    \lIf{$\mathbb{E}\Big[ \sum_{n = 1}^{N} g_c(\alpha_n c_n)\Big] \leq NC_{\rm max}$}{\textbf{goto} {\scriptsize{\textbf{\ref{line:return1}}}}.}
    \While{$|\mu^+\!-\!\mu^-| \geq \varepsilon_\mu$}
    {
    \nonl\textit{Step} $l$: \Comment{Outer loop (Bisection search)}\\
    Update $\mu^{(l)} \leftarrow \frac{\mu^++\mu^-}{2}$ and $\pi_\mu^* \leftarrow$ \FMain{$\mu^{(l)}$}.\\
    \If{$\mathbb{E}\Big[ \sum_{n = 1}^{N} g_c(\alpha_n c_n)\Big] \geq NC_{\rm max}$}{$\mu^- \leftarrow \mu^{(l)}$, and $\pi_\mu^- \leftarrow$ \FMain{$\mu^-$}.}
    \lElse{$\mu^+ \leftarrow \mu^{(l)}$, and $\pi_\mu^+ \leftarrow$ \FMain{$\mu^+$}.}
    Reset $l \leftarrow l+1$.
    } 
    \lIf{$\mathbb{E}\Big[ \sum_{n = 1}^{N} g_c(\alpha_n c_n)\Big] < NC_{\rm max}$}{$\pi_\mu^*(s) \leftarrow \eta \pi_\mu^-(s) + (1\!-\!\eta)\pi_\mu^+(s)$, $\forall s \in \mathcal{S}$.}
    \KwRet $\mu^* = \mu^{(l)}$ and $\pi^*(s) = \pi^*_{\mu}(s)$, $\forall s\in\mathcal{S}$.\label{line:return1}\\
    \nonl
    \nonl\hrulefill\\
    \nonl
    \Fn{\FMain{$\mu$}}{
    \KwInput{Known parameters $N \gg 1$, $\varepsilon_v$, states $\mathcal{S}$, and actions $\mathcal{A}$. Initial values $t \leftarrow 1$, $\pi_\mu^*(s) \leftarrow 0$, and $V_{\pi_\mu^*}^{(0)}(s) \leftarrow 0$, $\forall s \in \mathcal{S}$.
    }
        \nonl\textit{Iteration} $t$: \Comment{Inner loop (Value iteration approach)}\\
        \For{state $s \in \mathcal{S}$\label{line:iter_t}}{compute $V_{\pi_\mu^*}^{(t)}(s)$ from \eqref{opt:opt5}\label{line:compute_v}, then update $\pi_\mu^*(s)$ as in \eqref{opt:opt5b} with the optimal action from $V_{\pi_\mu^*}^{(t)}(s)$.
        }
        \If{$\underset{s\in \mathcal{S}}{\operatorname{max}}  \big|V_{\pi_\mu}^{(t+1)}(s) \!-\!V_{\pi_\mu}^{(t)}(s)\big|\geq {\varepsilon_v}$}{step up $t \leftarrow t+1$, and \textbf{goto} {\scriptsize{\textbf{\ref{line:iter_t}}}}.}
        \KwRet $\pi_\mu^*$.
  }
\end{algorithm}

The iterative approach is illustrated in Algorithm~\ref{Alg1} with two \emph{inner} and \emph{outer} loops for deriving the $\mu$-optimal policy, i.e., $\pi_\mu^*$, and the optimal Lagrange multiplier, i.e., $\mu^*$, respectively.

\subsubsection{Computing $\pi_\mu^*$}
In the inner loop, with a given $\mu$ from the outer loop, the query policy is iteratively updated taking an optimal action which maximizes the \emph{expected utility} (value) $V_{\pi_\mu}^{(t)}(s)$ for the state $s \in \mathcal{S}$ at the $t$-th, $t\in\mathbb{N}$, iteration. Under the form of the value iteration approach for the unichain policy MDPs \cite{puterman2014markov}, the optimal value function is derived from Bellman's equation \cite{bellman1952theory}, as follows
\begin{equation}\label{opt:opt5}
    V_{\pi_\mu^*}^{(t)}(s) = \underset{\alpha \in \mathcal{A}}{\operatorname{max}} \sum_{s^\prime \in \mathcal{S}} P(s, \alpha, s^\prime) \!\left[R_\mu(s, \alpha, s^\prime) \!+\!  V_{\pi_\mu^*}^{(t-1)}(s^\prime)\right]\!.
\end{equation}
Consequently, the optimal policy for $s \in \mathcal{S}$ is updated by
\begin{equation}\label{opt:opt5b}
    \pi_\mu^*(s) \in \underset{\alpha \in \mathcal{A}}{\operatorname{arg\,max}} \sum_{s^\prime \in \mathcal{S}} P(s, \alpha, s^\prime) \!\left[R_\mu(s, \alpha, s^\prime) \!+\!  V_{\pi_\mu^*}^{(t-1)}(s^\prime)\right]\!
\end{equation}
where $R_\mu(s, \alpha, s^\prime) \coloneqq R(s, \alpha, s^\prime) - \mu g_c(\alpha c)$ is the \emph{net} reward.

The inner loop stops once the stopping/convergence criterion $\underset{s\in \mathcal{S}}{\operatorname{max}}\big|V_{\pi_\mu^*}^{(t+1)}(s) \!-\! V_{\pi_\mu^*}^{(t)}(s)\big| \leq \varepsilon_v$ is satisfied, where $\varepsilon_v$ indicates the convergence accuracy. As the query policies are unichain with aperiodic transition matrices, the above convergence criterion is reached for some finite iterations \cite[Theorem~8.5.4]{puterman2014markov}.

\subsubsection{Computing $\mu^*$}
In order to find the optimal Lagrange multiplier $\mu^{(l)}$ at the $l$-th, $l\in\mathbb{N}$, step of the outer loop, according to the updated $\pi_\mu^*$ from the inner loop, we apply the so-called bisection method with the stopping criterion $|\mu^+\!-\!\mu^-|\leq \varepsilon_\mu$ and the search accuracy $\varepsilon_\mu$, as depicted in Algorithm~\ref{Alg1}. From \eqref{opt:opt2} and \eqref{opt:opt3}, the increase of $\mu$ continuously increases the dual function $h(\mu)$ while decreasing the net reward $\operatorname{GoE}_n^{\rm (pull)} - \mu g_c(\alpha_n c_n)$ and the query arrival rate. Thus, we search for the smallest value of $\mu$ that satisfies the communication cost $C_{\rm max}$. Since $\operatorname{GoE}_n^{\rm (pull)}$ is independent of $\mu g_c(\alpha_n c_n)$, one can verify that $h(\mu)$ is a Lipschitz continuous function of $\mu$ with the  Lipschitz constant of 
\begin{equation*}
   \bigg|C_{\rm max} - \underset{N \rightarrow \infty}{\lim \sup} ~ \frac{1}{N}\, \mathbb{E}\bigg[\sum_{n = 1}^{N} g_c(\alpha_n c_n) \bigg]\bigg|.
\end{equation*}
Thus, the outer loop converges to the optimal multiplier after finite iterations \cite[pp.~294]{Wood2009}. The optimal value is attained based on a simple non-randomized stationary policy or a mix of two non-randomized policies with a mixing probability $\eta$, which can be obtained such that
$\mathbb{E}\big[ \sum_{n = 1}^{N} g_c(\alpha_n c_n)\big] = NC_{\rm max}$ \cite{beutler1985optimal}.

Algorithm~\ref{Alg1} has at most $\mathcal{O}(2LT\Delta^2_{\rm max}|\mathcal{V}|^2)$ arithmetic operations, where $L$ and $T$ indicate the step sizes of the outer and inner loops, respectively. The increase of the states and the iteration size of either loop increase the algorithm’s complexity. 
Nevertheless, this complexity is manageable in real-world scenarios owing to the following items:
\begin{itemize}
\renewcommand{\labelitemi}{\tiny$\blacksquare$}
    \item $\Delta_{\rm max}$ is set to a small value since the effectiveness of an update at the endpoint saturates to a close to zero value past a level of staleness, making that update not useful.
    \item As the usefulness of the updates can be normalized,i.e., $\nu_i \in [0, 1]$, $\forall i \in \mathcal{V}$, a large outcome space is not necessarily needed for $\mathcal{V}$.
    \item The communication cost is usually fixed, thus reducing the bisection search interval, hence $L$, after some trials.
\end{itemize}

\subsection{Threshold-Based Query Control Model}\label{Sec:Sec4d}
Finding the optimal policy $\pi^*$ from Algorithm~\ref{Alg1} gives us a threshold criterion $\Omega_{\rm th}$ for every CMDP's state, following which maximizes the GoE of the system subject to the communication cost. With $\Omega_{\rm th}$ in hand, the query controller timely decides to pull an update or not based on the current AoI, i.e., $\Delta_n$, and the meta-value of the latest correctly received update at the query time, i.e., $v_m$ for $m = \underset{i:\,\alpha_i(1-\epsilon_i)=1,\, i \leq n}{\rm {max}~} i$. In this sense, the optimal action $\alpha_n^*$ for the $n$-th time slot can be derived according to two alternative options, as given below.
\begin{enumerate}[leftmargin=5mm,align=left]
    \item[\textbf{{Option~I:}}] The value of $\Delta_{\rm th}$ is a function of $v_m$, where
    \begin{equation}\label{opt:opt7}
    \Omega_{\rm th}  = \Delta_{\rm th}^{(v_m)} \text{,~hence~} \alpha_n^* = \mathbbm{1}(\Delta_n \geq \Delta_{\rm th}^{(v_m)} \,|\, v_m).
    \end{equation}
    \item[\textbf{{Option~II:}}] The level of $v_{\rm th}$ depends on $\Delta_n$. Thus, we have
    \begin{equation}\label{opt:opt8}
     \Omega_{\rm th}  = v_{\rm th}^{(\Delta_n)} \text{,~hence~} \alpha_n^* = \mathbbm{1}(v_m \leq v_{\rm th}^{(\Delta_n)} \,|\, \Delta_n).
    \end{equation}
\end{enumerate}

\section{Simulation Results}\label{Sec:Sec5}
We assess the performance of the proposed effect-aware query control policy within $N= 1000$ time slots and compare it with three existing effect-agnostic query arrival models, namely (i) \emph{periodic} model, and stochastic models following (ii) \emph{binomial} and (iii) \emph{Markovian} process. For the latter model, we have a Markov chain with two states of ``pulling an update" and ``keeping silent", in which the self-transition probability of the latter state is $0.95$, while the one for the primary state relies on the query rate. The default query rate for the effect-agnostic policies is $0.8$ unless otherwise stated. 
Furthermore, we equally divide the interval $[0, 1]$ into $|\mathcal{V}| = 10$ levels and initialize the importance set $\mathcal{V} = \{0, 0.11, 0.22, 0.33, 0.44, 0.56, 0.67, 0.78, 0.89, 1
\}$, each indicating a normalized rank of importance with $p_{ii}=p_{ij}=\frac{1}{|\mathcal{V}|}$, $\forall \nu_i,\nu_j\in \mathcal{V}$. The maximum acceptable AoI is $\Delta_{\rm max}=10$, and the probability of the update discrepancy is $p_\epsilon^{(n)}=0.2$, $\forall n$. Also, for Algorithm~\ref{Alg1}, we set $\varepsilon_v = \varepsilon_\mu = 10^{-3}$ and $\eta=0.5$.

For performance evaluation, we define a \emph{net} GoE (NGoE) metric, which incorporates the GoE and the cost, as follows
\begin{align}\label{eq:eqmetric}
    \operatorname{NGoE}_n^{\rm (pull)} &= \operatorname{exp}\!\big(\!\operatorname{GoE}_n^{\rm (pull)} \!- g_c(\alpha_n c_n)\big) \nonumber \\
    & = \operatorname{exp}\!\big(\!-\!v_n\Delta_n \!-\! c_0\alpha_n \big)
\end{align}
for the $n$-th slot, $\forall n\in \mathcal{N}$, where the exponential form is arbitrarily used to guarantee positiveness. In \eqref{eq:eqmetric}, without loss of generality, we employ a \emph{linear} form for $g_\Delta$, $g_v$, and $g_c$, and a \emph{multiplication} form for $f$ according to \eqref{eq:eq1}. Besides, we consider \emph{uniform} cost $c_n = c_0$, $\forall n$, and unless otherwise specified, we initialize $c_0=0.5$ and $C_{\rm max} = 0.4$.

\begin{figure}[t!]
    \centering
    \pstool[scale=0.43]{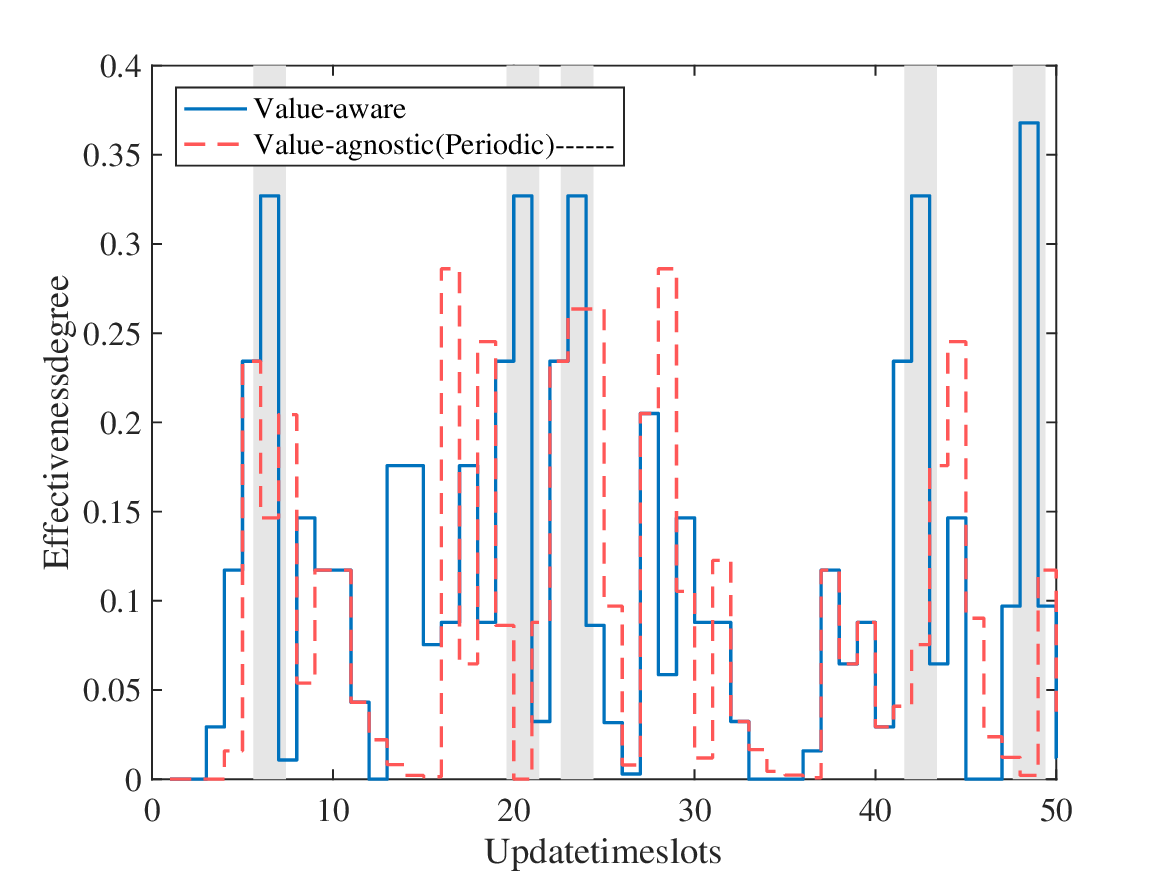}{
    \psfrag{Value-aware}{\hspace{0.00cm}\scriptsize Effect-aware}
    \psfrag{Value-agnostic(Periodic)------}{\hspace{0.00cm}\scriptsize Effect-agnostic (Periodic)}
    \psfrag{Effectivenessdegree}{\hspace{0.35cm}\footnotesize $\operatorname{NGoE}_n^{\rm (pull)}$}
    \psfrag{Updatetimeslots}{\hspace{-0.32cm}\footnotesize Update time slots ($n$)}
    }
    \vspace{-0.2cm}
    \caption{The normalized effectiveness performance of different query control policies within $N=50$ slots.} 
    \label{fig:fig2}
\end{figure}
Fig.~\ref{fig:fig2} depicts a $50$-slot snapshot of the status update system according to the arrived queries and the updates' NGoE, comparing the normalized GoE of the proposed effect-aware with the periodic effect-agnostic query control policy. The latter policy is assumed to have $7$-slot period intervals. We observe that our effect-aware policy enables capturing updates with the highest NGoE, of which the horizons are highlighted. However, in some slots, the effect-aware policy does not pull usefulness updates due to the probabilistic uncertainties of the CMDP problem, while the periodic one can catch them by chance. 

\begin{figure}[t!]
    \centering
    \pstool[scale=0.43]{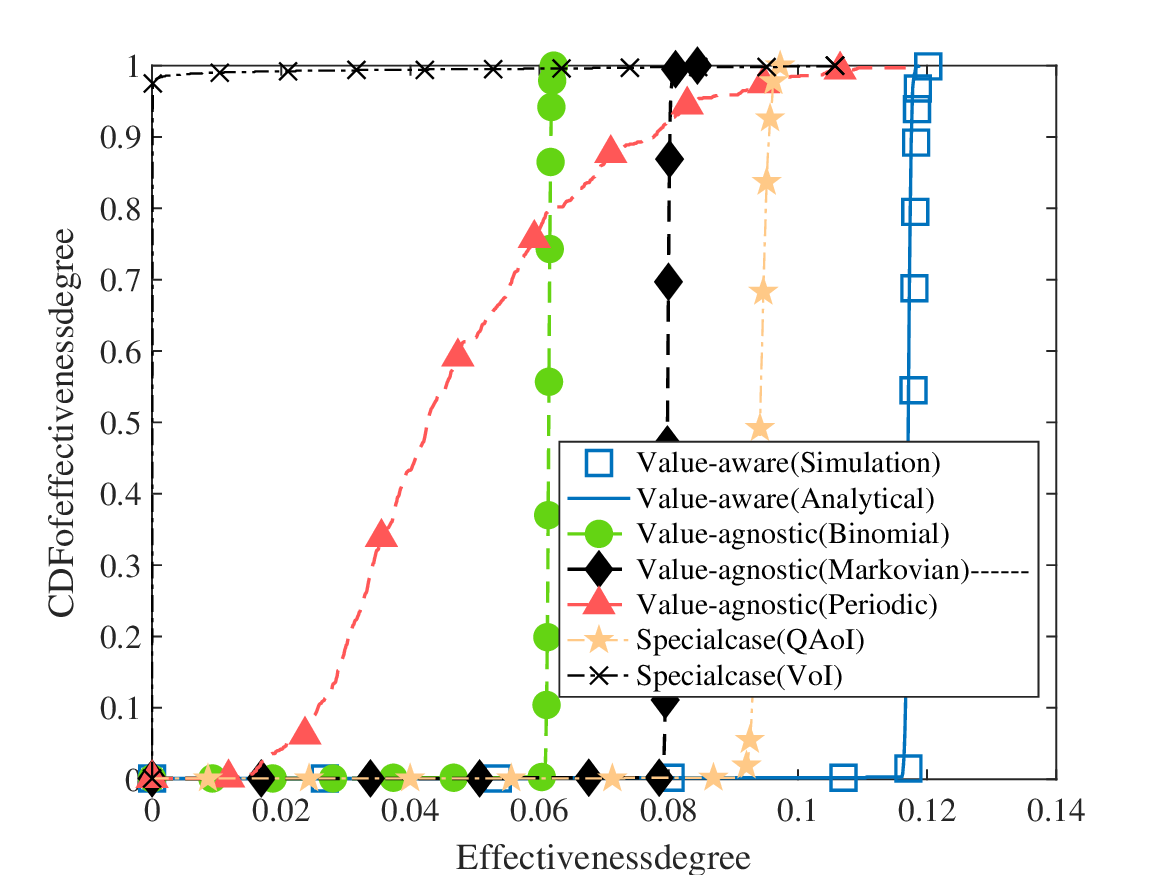}{
    \psfrag{Value-aware(Simulation)}{\hspace{0.00cm}\scriptsize Effect-aware (Simulation)}
    \psfrag{Value-aware(Analytical)}{\hspace{0.00cm}\scriptsize Effect-aware (Analytical)}
    \psfrag{Value-agnostic(Periodic)}{\hspace{0.00cm}\scriptsize Effect-agnostic (Periodic)}
    \psfrag{Value-agnostic(Binomial)}{\hspace{0.00cm}\scriptsize Effect-agnostic (Binomial)}
    \psfrag{Value-agnostic(Markovian)------}{\hspace{0.00cm}\scriptsize Effect-agnostic (Markovian)}
    \psfrag{Specialcase(QAoI)}{\hspace{0.00cm}\scriptsize QAoI-aware}
    \psfrag{Specialcase(VoI)}{\hspace{0.00cm}\scriptsize VoI-aware}
    \psfrag{Effectivenessdegree}{\hspace{0.26cm}\footnotesize Average NGoE}
    \psfrag{CDFofeffectivenessdegree}{\hspace{0.1cm}\footnotesize CDF of average NGoE}
    }
    \vspace{-0.1cm}
    \caption{The CDF of the average NGoE provided over $N=1000$ time slots.}
    \label{fig:fig3}
\end{figure}
Fig.~\ref{fig:fig3} presents the cumulative distribution function (CDF) of the average NGoE provided by applying the effect-aware and effect-agnostic policies over $N=1000$ time slots. It is shown that the effect-aware policy highly boosts the effectiveness of the system, thanks to its prediction of the updates’ usefulness and through pulling significant updates, compared to the effect-agnostic policies. Specifically, the effect-aware policy increases the effectiveness by $91\%$, $47\%$, and $149\%$ on average, as compared to the binomial, Markovian, and periodic models, respectively. This comes at the cost of $16\%$ on average higher transmission rate for the effect-aware policy than the others. 

\begin{figure}[t!]
    \centering
    \subfloat[]{
    \pstool[scale=0.43]{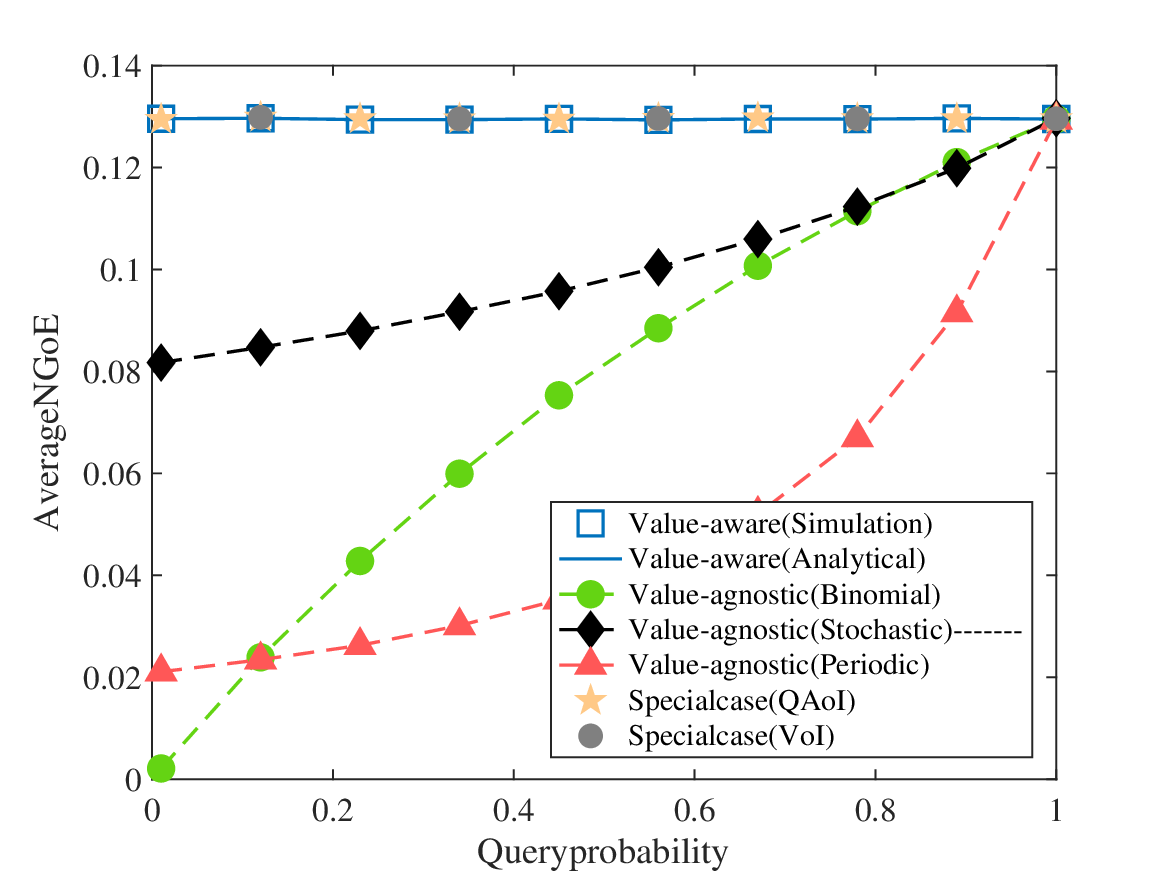}{
    \psfrag{Value-aware(Simulatios)-----}{\hspace{0.00cm}\scriptsize Effect-aware (Simulation) $c_0=0.1$}
    \psfrag{Value-aware(Simulation)}{\hspace{0.00cm}\scriptsize Effect-aware (Simulation)}
    \psfrag{Value-aware(Analytical)}{\hspace{0.00cm}\scriptsize Effect-aware (Analytical)}
    \psfrag{Value-agnostic(Periodic)}{\hspace{0.00cm}\scriptsize Effect-agnostic (Periodic)}
    \psfrag{Value-agnostic(Binomial)}{\hspace{0.00cm}\scriptsize Effect-agnostic (Binomial)}
    \psfrag{Value-agnostic(Stochastic)-------}{\hspace{0.00cm}\scriptsize Effect-agnostic (Markovian)}
    \psfrag{Specialcase(QAoI)}{\hspace{0.00cm}\scriptsize QAoI-aware}
    \psfrag{Specialcase(VoI)}{\hspace{0.00cm}\scriptsize VoI-aware}
    \psfrag{AverageNGoE}{\hspace{-0.06cm}\footnotesize Average NGoE}
    \psfrag{Queryprobability}{\hspace{-0.29cm}\footnotesize Controlled query rate}
    }}
    \hfill
    \vspace{-0.005cm}
    \centering
    \subfloat[]{
    \pstool[scale=0.43]{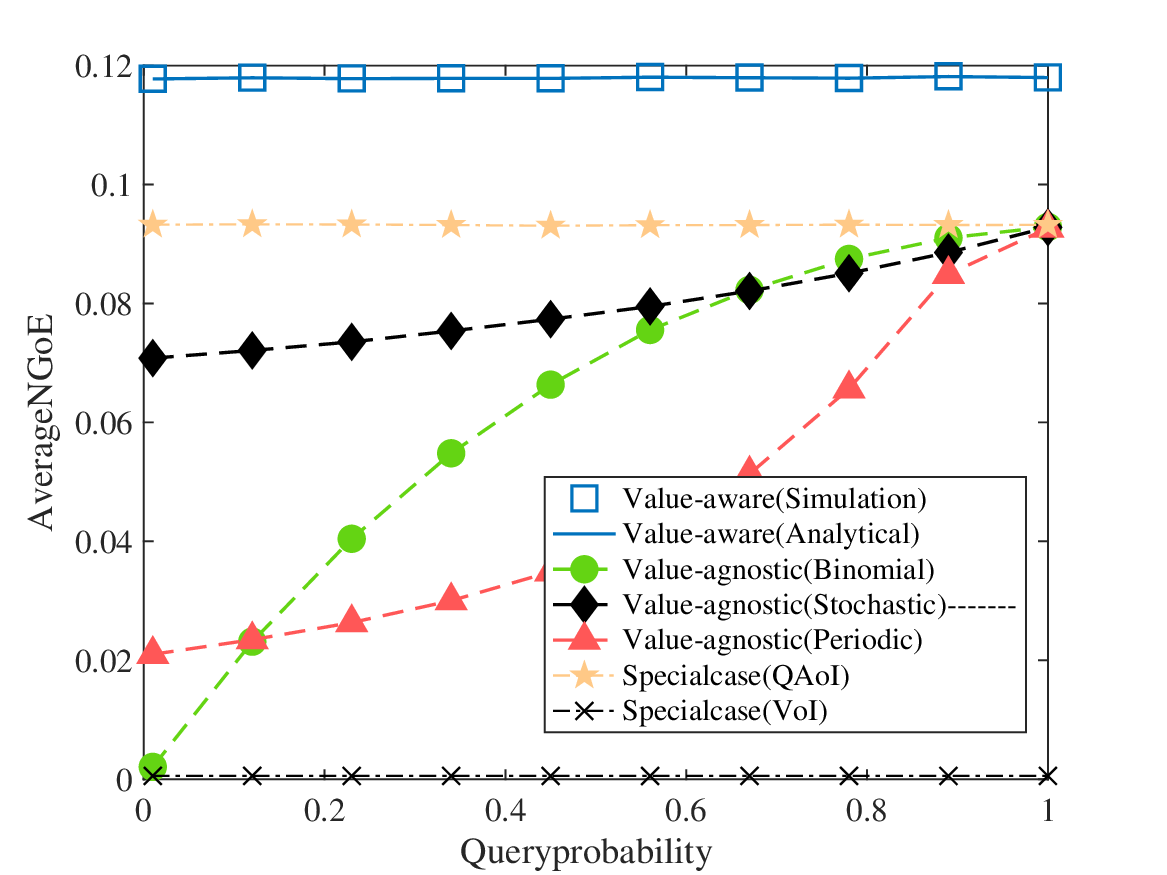}{
    \psfrag{Value-aware(Simulation)}{\hspace{0.00cm}\scriptsize Effect-aware (Simulation)}
    \psfrag{Value-aware(Analytical)}{\hspace{0.00cm}\scriptsize Effect-aware (Analytical)}
    \psfrag{Value-agnostic(Periodic)}{\hspace{0.00cm}\scriptsize Effect-agnostic (Periodic)}
    \psfrag{Value-agnostic(Binomial)}{\hspace{0.00cm}\scriptsize Effect-agnostic (Binomial)}
    \psfrag{Value-agnostic(Stochastic)-------}{\hspace{0.00cm}\scriptsize Effect-agnostic (Markovian)}
    \psfrag{Specialcase(QAoI)}{\hspace{0.00cm}\scriptsize QAoI-aware}
    \psfrag{Specialcase(VoI)}{\hspace{0.00cm}\scriptsize VoI-aware}
    \psfrag{AverageNGoE}{\hspace{-0.06cm}\footnotesize Average NGoE}
    \psfrag{Queryprobability}{\hspace{-0.29cm}\footnotesize Controlled query rate}
    }
    }
    \caption{The interplay between average NGoE and query rate for different query control policies and cost coefficients (a) $c_0=0.1$ and (b) $c_0=0.5$.}
    \label{fig:fig4}
\end{figure}
Figs.~\ref{fig:fig4}\,(a) and \ref{fig:fig4}\,(b) show the interplay between average NGoE and controlled query rate for $c_0=0.1$ and $c_0=0.5$, respectively, for effect-aware and effect-agnostic policies, as well as for two special cases relevant to QAoI and VoI within $N=1000$ time slots. In the QAoI-aware case, decision policies are obtained to maximize the long-term expected QAoI regardless of the usefulness or importance of the updates, whereas in the VoI-aware case, decision policies depend on the expected VoI, without taking freshness into account. We can see that the effect-aware policy outperforms the other effect-agnostic policies for all query rates and under both cost values.
Comparing Fig.~\ref{fig:fig4}\,(a) with Fig.~\ref{fig:fig4}\,(b), we infer that, for $c_0=0.1$, the special cases offer almost the same performance as the effect-aware policy considering both freshness and usefulness attributes. This is because all policies can pull the same updates, thus having identical performance. However, for $c_0=0.5$, the effect-aware policy outperforms both special cases for all query rates, even using fewer resources, owing to its effect-aware query control. Also, since $\Delta_n\geq1$, and $0\leq v_n\leq1$, $\forall n$, bypassing the freshness attribute in the GoE metric leads to a significant performance drop. Besides, the increase of the cost decreases the offered average NGoE, and that at a faster speed for the effect-agnostic policies. Indeed, not useful or irrelevant transmissions under the effect-agnostic policies result in a higher waste of resources and a larger effectiveness gap.

To study the convergence of the iterative approach for solving the CMDP problem regarding to Algorithm~\ref{Alg1}, we plot Fig.~\ref{fig:fig5}, which illustrates the value (expected utility) obtained in each iteration for different cost coefficients. It can be seen that the policy convergences to its final value after $130$, $133$, and $137$ iterations, sequentially, for the cost coefficients $c_0=0.1$, $c_0=0.5$, and $c_0=1$. We can also observe that the value provided in each iteration decreases by increasing the cost, in line with the formulation of the CMDP's rewards in Section~\ref{Sec:Sec4a}.
\begin{figure}[t!]
    \centering
    \pstool[scale=0.43]{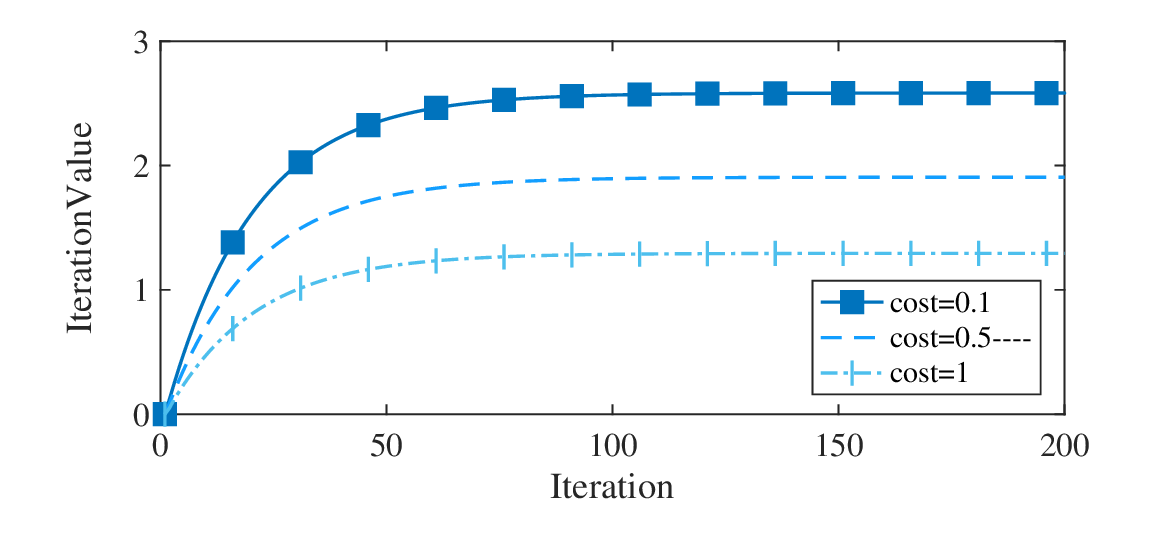}{
    \psfrag{cost=0.1}{\hspace{0.00cm}\scriptsize $c_0 = 0.1$}
    \psfrag{cost=0.5----}{\hspace{0.00cm}\scriptsize $c_0 = 0.5$}
    \psfrag{cost=1}{\hspace{0.00cm}\scriptsize $c_0 = 1$}
    \psfrag{IterationValue}{\hspace{-0.15cm}\footnotesize Expected utility}
    \psfrag{Iteration}{\hspace{-0.0cm}\footnotesize  Iteration}
    }
    \vspace{-0.6cm}
    \caption{The expected utility obtained in each iteration of Algorithm~\ref{Alg1}.}
    \label{fig:fig5}
\end{figure}

\noindent
\renewcommand{\arraystretch}{1}
\begin{table}[t!]
\begin{center}
\caption{Illustration of the threshold boundaries for the threshold-based query control model}\label{tab:tab1}
\vspace{-0.11cm}
\begin{tabular}{| l | l |  c | c | c | c | c | c | c |}
\hline
\multicolumn{3}{|c|}{\!\!\small $v_m$\!\!}&\multicolumn{1}{c|}{\small \!$0 \rightarrow 0.44$\!}&\multicolumn{1}{c|}{\small \!$0.56$\!}&\multicolumn{1}{c|}{\small \!$0.67$\!}&\multicolumn{1}{c|}{\small \!$0.78$\!}&\multicolumn{1}{c|}{\small \!$0.89$\!}&\multicolumn{1}{c|}{\small \,~$1$~\, }\\
\hline
\hline
\!\!\parbox[t]{3.0mm}{\multirow{3}{*}{\rotatebox[origin=c]{90}{\small $c_0\!=\!0.5$}}} \!\!\!\!&\!\!\parbox[t]{3.0mm}{\multirow{3}{*}{\rotatebox[origin=c]{90}{\small $\Delta_n$}}} \!\!\!\!&\multicolumn{1}{c|}{\small $1 \rightarrow 2$}&\cellcolor{pull}&\multicolumn{5}{c|}{\cellcolor{silent}{\footnotesize Silent}}  \\
\cline{3-3}
\hhline{*{3}{|~}>{\arrayrulecolor{pull}}->{\arrayrulecolor{black}}*{5}{|-}}
&&\multicolumn{1}{c|}{\small $3$}&\cellcolor{pull}&\multicolumn{2}{c|}{\cellcolor{pull}}&\multicolumn{3}{c|}{\cellcolor{silent}\footnotesize Silent} \\
\cline{3-3}
\hhline{*{3}{|~}>{\arrayrulecolor{black}}|>{\arrayrulecolor{pull}}->{\arrayrulecolor{black}}|>{\arrayrulecolor{pull}}-->{\arrayrulecolor{black}}*{3}{|-}}
&&\multicolumn{1}{c|}{\small \!$4 \rightarrow 10$\!}&\cellcolor{pull}\multirow{-3}{*}{\footnotesize Pull}&\multicolumn{2}{c|}{\cellcolor{pull}\multirow{-2}{*}{\footnotesize Pull}}&\multicolumn{3}{c|}{\cellcolor{pull}\footnotesize Pull}\\
\hline
\hline
\!\!\parbox[t]{3.0mm}{\multirow{2}{*}{\rotatebox[origin=c]{90}{\small $c_0\!=\!1$}}} \!\!\!\!&\!\!\parbox[t]{3.0mm}{\multirow{2}{*}{\rotatebox[origin=c]{90}{\small $\Delta_n$}}} \!\!\!\!&\multicolumn{1}{c|}{\small $1 \rightarrow 3$}&\cellcolor{pull}&\multicolumn{5}{c|}{\cellcolor{silent}{\footnotesize Silent}} \\
\cline{3-3}
\hhline{*{3}{|~}>{\arrayrulecolor{black}}|>{\arrayrulecolor{pull}}->{\arrayrulecolor{black}}*{5}{|-}}
&&\multicolumn{1}{c|}{\small \!$4 \rightarrow 10$\!}&\cellcolor{pull}\multirow{-2}{*}{\footnotesize Pull}&\multicolumn{5}{c|}{\cellcolor{pull}\footnotesize Pull}\\
\hline
\end{tabular}
\medskip
\end{center}
\vspace{-0.3cm}
\end{table}
Finally, Table~\ref{tab:tab1} demonstrates the threshold boundaries for the threshold-based query control model discussed in Section~\ref{Sec:Sec4d} for cost coefficients $c_0=0.5$ and $c_0=1$. The term ``Silent" corresponds to $\alpha_n=0$, and ``Pull" indicates $\alpha_n=1$. Either Option~I or Option~II could be applied to find $\Omega_{\rm th}$ based on $\Delta_{\rm th}^{(v_m)}$ or $v_{\rm th}^{(\Delta_n)}$, respectively. As in Table~\ref{tab:tab1}, the query controller pulls updates in case $v_m \leq 0.44$, $\forall m$, regardless of the current AoI. Also, updates are always pulled if $\Delta_n \geq 4$, $\forall n$, independent of the degree of usefulness. For the above conditions, we thus have fixed $\Delta_{\rm th} = 4$ and $v_{\rm th} = 0.44$ under any cost coefficient for the primary and latter options, respectively. In order to derive the optimal action for the other conditions (CMDP’s states), the communication cost plays a key role, hence resulting in variable threshold metrics. As the cost increases, the controller should pull updates merely under more \emph{critical} conditions to increase the system's NGoE. 

\vspace{-0.05cm}
\section{Conclusion}\label{Sec:Sec6}
We have proposed an effect-aware query control policy for pull-based communication systems, in which the query controller timely decides whether to pull an update, depending on the source's evolution and the updates' effectiveness at the endpoint. We have considered the problem of GoE maximization based on a CMDP with finite state spaces for the AoI and the usefulness rank and have provided an algorithm to find the class of optimal policies. Our results have shown that the effect-aware query policy could provide significant gains in terms of normalized GoE compared to effect-agnostic policies for different query rates and communication costs.

\vspace{-0.05cm}
\bibliographystyle{IEEEtran}
\bibliography{References.bib}
\balance

\end{document}